\documentclass{article} 

\usepackage{graphicx}

\usepackage[english]{babel}
\usepackage{amsmath}
\usepackage{amssymb}
\usepackage{amsfonts}
\usepackage{amsthm}
\usepackage{mathrsfs}
\usepackage{enumitem}
\usepackage{authblk}

\usepackage{wasysym}

\usepackage{multicol}
\usepackage{booktabs}

\usepackage[round]{natbib}
\bibliographystyle{abbrvnat}

\usepackage{verbatim}


\renewcommand{\geq}{\geqslant}
\renewcommand{\phi}{\varphi}
\newcommand{\Lang}{\mathcal L}

\newcommand{\D}{\mathcal D}
\newcommand{\B}{\text{\boldmath $B$}}
\renewcommand{\P}{\text{\boldmath $P$}}

\newcommand{\X}{\mathcal X}
\newcommand{\I}{\mathcal I}
\newcommand{\PS}{\text{\it PS}}
\newcommand{\ic}{\text{\sc IC}}
\newcommand{\N}{\mathcal N}

\newcommand{\mo}{\text{\rm Mod}}




\newtheorem{theorem}{Theorem}

\theoremstyle{definition}
\newtheorem{definition}{Definition}

\begin{document}

\title{The Common Structure of Paradoxes in\\ Aggregation Theory}
\author{Umberto Grandi}
\affil{    Department of Mathematics\\
	   University of Padova\\
		Italy \\
	     umberto.uni@gmail.com}

\maketitle

\begin{abstract}
In this paper we analyse some of the classical paradoxes in Social Choice Theory (namely, the Condorcet paradox, the discursive dilemma, the Ostrogorski paradox and the multiple election paradox) using a general framework for the
study of aggregation problems called binary aggregation with integrity constraints. 
We provide a definition of paradox that is general enough to account for the four cases mentioned, and identify a common structure in the syntactic properties of the rationality assumptions that lie behind such paradoxes.
We generalise this observation by providing a full characterisation of the set of rationality assumptions on which the majority rule \emph{does not} generate a paradox.

\end{abstract}



\section{Introduction}


Most work in Social Choice Theory started with the observation of paradoxical situations. 
From the Marquis de Condorcet (\citeyear{Condorcet}) to more recent American court cases \citep{KornhauserSager1986}, a wide collection of paradoxes have been analysed and studied in the literature on Social Choice Theory \citep[see, e.g.,][]{Nurmi}.
More recently, researchers in Artificial Intelligence and in particular from the novel research area of Computational Social Choice \citep[see, e.g.,][]{BrandtEtAlMAS2013} have become interested in the study of collective choice problems in which the set of alternatives has a combinatorial structure. 
Novel paradoxical situations emerged from the study of these situations, and the combinatorial structure of the domains gave rise to interesting computational challenges \citep[][]{ChevaleyreEtAlAIMag2008}.

This paper concentrates on the use of the majority rule in binary combinatorial domains, and investigates the question of what constitutes a paradox in such a setting.
We identify a common structure behind the most classical paradoxes in Social Choice Theory, putting forward a general definition of paradox in aggregation theory. 
By characterising paradoxical situations by means of computationally recognisable properties, we aim at providing more domain-specific research with new tools for the development of safe procedures for collective decision making.

We base the analysis on our previous work on binary aggregation with integrity constraints \citep{GrandiEndrissIJCAI2011, GrandiEndrissAIJ2013}, which constitutes a general framework for the study of aggregation problems. 
In this setting, a set of individuals needs to take a decision over a set of binary issues, and these choices are then aggregated into a collective one.
Given a rationality assumption that binds the choices of the individuals, we define a paradox as a situation in which all individuals are rational but the collective outcome is not.
We present some of the most well-known paradoxes that arise from the use of the majority rule in different contexts, and we show how they can be expressed in binary aggregation as instances of this general definition.
Our analysis focuses on the Condorcet paradox~(\citeyear{Condorcet}), the discursive dilemma in judgment aggregation \citep{KornhauserSager1986, ListPettit2002}, the Ostrogorski paradox (\citeyear{Ostrogorski}) and the more recent work of \citet{BramsEtAl1998} on multiple election paradoxes.

Such a uniform representation of the most important paradoxes in Social Choice Theory enables us to make a crucial observation concerning the syntactic structure of the rationality assumptions that lie behind these paradoxes. 
We represent rationality assumptions by means of logical formulas in a simple propositional language, and we observe that all formulas formalising a number of classical paradoxes feature a disjunction of literals of size at least~$3$. 
This observation can be generalised to a full characterisation of the rationality assumptions on which the majority rule does not generate a paradox, and in Theorem~\ref{thm:icmajority} we identify them as those formulas that are equivalent to a conjunction of clauses of size at most~$2$.

This work needs to be positioned in the growing literature on judgment aggregation \citep[see, e.g.,][]{ListPuppe2009}, and we discuss in a dedicated section the relation between classical frameworks for judgment aggregation and our setting.
While many of the findings shown in this paper can be traced back to known results from this literature, to the best of our knowledge this paper presents the first comprehensive study of paradoxes in aggregation theory and puts forward a simple yet general framework for a unified analysis of such recurrent situations.

The paper is organised as follows. 
In Section~\ref{sec:definition} we give the basic definitions of the framework of binary aggregation with integrity constraints, and we provide a general definition of paradox. 
In Section~\ref{sec:unify} we show how a number of paradoxical situations in Social Choice Theory can be seen as instances of our general definition of paradox, and we identify a syntactic property that is common to all paradoxical rationality assumptions.
Section~\ref{sec:majority} provides a syntactic characterisation of the paradoxical situations for the majority rule, and Section~\ref{sec:conclusions} concludes the paper.


\section{Binary Aggregation with Integrity Constraints}\label{sec:definition}


In this section we provide the basic definitions of the framework of binary aggregation with integrity constraints which we developed in previous work \citep{GrandiEndrissIJCAI2011} based on work by \citet{Wilson1975} and Dokow and Holzman \citep{DokowHolzman2009, DokowHolzman2008}.
In this setting, a number of individuals each need to make a yes/no choice regarding a number of issues and these choices then need to be aggregated into a collective choice. 
Paradoxical situations may occur when a set of individual choices that is considered rational leads to a collective outcome which fails to satisfy the same rationality assumption of the individuals.


\subsection{Terminology and Notation}

Let $\mathcal{I}=\{1,\dotsc,m\}$ be a finite set of \emph{issues}, and let $\mathcal{D}=D_1\times\dots\times D_m$ be a \emph{boolean combinatorial domain}, i.e., $|D_j|=2$ for all $j\in\mathcal{I}$. Without loss of generality we assume that $D_j=\{0,1\}$ for all $j$. Thus, given a set of issues $\I$, the domain associated with it is $\D=\{0,1\}^\I$. A \emph{ballot}\index{ballot} $B$ is an element of $\D$.

In many applications it is necessary to specify which elements of the domain are rational and which should not be taken into consideration. Propositional logic provides a suitable formal language to express possible restrictions of rationality on binary combinatorial domains. 
If $\I$ is a set of $m$ issues, let $\PS=\{p_1,\dotsc,p_m\}$ be a set of propositional symbols, one for each issue, and let $\Lang_{\PS}$ be the propositional language constructed by closing $\PS$ under propositional connectives. 
For any formula $\varphi\in\Lang_{\PS}$, let $\mo(\varphi)$ be the set of assignments that satisfy~$\varphi$. For example, $\mo(p_1\wedge\neg p_2)=\{(1,0,0),(1,0,1)\}$ when $\PS=\{p_1,p_2,p_3\}$. 
An \emph{integrity constraint} is any formula $\ic\in\Lang_{\PS}$. 

Integrity constraints can be used to define what tuples in $\D$ we consider \emph{rational} choices. Any ballot $B\in\D$ is an assignment to the variables $p_1,\dotsc,p_m$, and we call $B$ a \emph{rational ballot} if it satisfies the integrity constraint $\ic$, i.e., if $B$ is an element of $\mo(\ic)$. 
In the sequel we shall use the terms ``integrity constraints'' and ``rationality assumptions''\index{rationality assumption} interchangeably.

Let $\N=\{1,\dotsc,n\}$ be a finite set of \emph{individuals}. We make the assumption that there are at least 2~individuals.
Each individual submits a ballot $B_i \in \D$ to form a \emph{profile} $\B=(B_1,\dotsc,B_n)$. 
We write $b_j$ for the $j$th element of a ballot $B$, and $b_{i,j}$ for the $j$th element of ballot $B_i$ within a profile $\B=(B_1,\dotsc,B_n)$.
Given a finite set of issues $\I$ and a finite set of individuals~$\N$, an \emph{aggregation procedure} is a function $F:\D^\N\to\D$, mapping each profile of binary ballots to an element of~$\D$. Let $F(\B)_j$ denote the result of the aggregation of profile $\B$ on issue~$j$.


\subsection{A General Definition of Paradox}

Consider the following example: Let $\ic=p_1\wedge p_2\rightarrow p_3$ and suppose there are three individuals, choosing ballots $(0,1,0)$, $(1,0,0)$ and $(1,1,1)$. 
Their choices are rational (they all satisfy $\ic$). 
However, if we employ the \emph{majority rule}, i.e., we accept an issue $j$ if and only if a majority of individuals do, we obtain the ballot $(1,1,0)$ as collective outcome, which fails to be rational. This kind of observation is often referred to as a paradox.

We now give a general definition of paradoxical behaviour of an aggregation procedure in terms of the violation of certain rationality assumptions.

\begin{definition}\label{def:paradox}
A \emph{paradox} is a triple $(F,\B,\ic)$, where $F$ is an aggregation procedure, $\B$ is a profile in $\D^\N$, $\ic$ is an integrity constraint in $\Lang_{\PS}$, and $B_i\in\mo(\ic)$ for all~$i\in\N$ but $F(\B)\not\in\mo(\ic)$.
\end{definition}

\noindent
A closely related notion is that of collective rationality:

\begin{definition}\label{def:CR}
Given an integrity constraint $\ic\in\Lang_{\PS}$, an aggregation procedure $F$ is called \emph{collectively rational} (CR) with respect to $\ic$, if for all rational profiles $\B\in\mo(\ic)^\N$ we have that $F(\B)\in \mo(\ic)$.
\end{definition}

\noindent
Thus, $F$ is CR with respect to $\ic$ if it \emph{lifts} the rationality assumption given by $\ic$ from the individual to the collective level, i.e., if $F(\B)\in\mo(\ic)$ whenever $B_i\in\mo(\ic)$ for all $i\in\N$. An aggregation procedure that is CR with respect to $\ic$ cannot generate a paradoxical situation with $\ic$ as integrity constraint.


\subsection{Related Work in Binary Aggregation}


\citet{Wilson1975} has been the first to define and study the framework of binary aggregation. His seminal paper contains several impossibility results for aggregation procedures satisfying an axiomatic property known as \emph{independence}, including a generalisation of the famous impossibility result by \citet{Arrow1963}. 
Wilson's notion of responsive aggregator corresponds to our notion of collective rationality with respect to a family of integrity constraints. \citet{RubinsteinFishburn86} generalised Wilson's framework allowing individuals to choose elements of certain vector spaces. The case of binary aggregation is subsumed by considering the vector space $\D=\{0,1\}^\I$.

A similar setting has been investigated more recently by \citet{DokowHolzman2009, DokowHolzman2008}. 
Their definition of collective rationality is the same as Wilson's, although they consider a single subdomain $\mathcal X\subseteq \D$ of rational ballots at a time rather than a family of such subsets. 
Note that propositional logic is fully expressive with respect to subsets of $\D$, i.e., for every subset $X\subseteq \D$ there exists a formula $\phi_X$ such that $\mo(\phi_X)=X$, hence our approach is equivalent to that of Dokow and Holzman. 
Our choice of using formulas rather than sets is motivated by the possibility of classifying integrity constraints by means of syntactic properties and by the the compactness of this representation. 

Another framework for binary aggregation has been proposed by Nehring and Puppe \citep{NehringPuppe2007, NehringPuppe2010}. Although their aim is more general, they also concentrate on the study of aggregation procedures over property spaces, a setting that is closer to the original framework of \citet{Wilson1975}. 

An important, although not substantial, difference between our framework and classical approaches to binary aggregation resides in our definition of aggregation procedure. Both \citet{DokowHolzman2008} and \citet{NehringPuppe2007} define an aggregation procedure on a specific domain $\mathcal X\subseteq\{0,1\}^m$, including in this definition the notion of collective rationality with respect to the integrity constraint that defines $\mathcal X$. 
The same approach is also used in the literature on judgment aggregation \citep{ListPuppe2009}. 
Instead, we define aggregation procedures on all possible profiles, studying collective rationality as an additional property of an aggregator. 
Our choice is motivated by an attempt to separate the definition of an aggregation procedure and its axiomatic properties from the notion of collective rationality, which depends on the domain of rational ballots on which the aggregation is performed. 

In several papers \citep[see, e.g.,][]{ListPuppe2009, NehringPuppe2010, DokowHolzman2008} it has been observed that the framework of judgment aggregation for propositional logic is equivalent to that of binary aggregation. In Section~\ref{sec:jaba} we discuss in detail the relation between judgment aggregation and binary aggregation.



\section{Unifying Paradoxes in Binary Aggregation}\label{sec:unify}


In this section we present a number of classical paradoxes from Social Choice Theory, and we show how they can be seen as instances of our Definition~\ref{def:paradox}.
In Section~\ref{sec:paba} we introduce the Condorcet paradox, and we show how settings of preference aggregation can be seen as instances of binary aggregation by devising a suitable integrity constraint. 
Section~\ref{sec:jaba} repeats this construction for the framework of judgment aggregation and for the discursive dilemma. 
In Section~\ref{sec:ostrogorski} we then deal with the Ostrogorski paradox, in which a paradoxical feature of representative majoritarian systems is analysed, and in Section~\ref{sec:commonstructure} we conclude by identifying a common structure in the integrity constraints that lie behind those paradoxes. 
Section~\ref{sec:mep} presents two further paradoxes that occur when voting with multiple issues.


\subsection{The Condorcet Paradox\index{paradox!Condorcet paradox} and Preference Aggregation}\label{sec:paba}


During the Enlightment period in France, several active scholars dedicated themselves to the problem of collective choice, and in particular to the creation of new procedures for the election of candidates. Although these are not the first documented studies of the problem of social choice \citep{LeanUrken}, Marie Jean Antoine Nicolas de Caritat, the Marquis de Condorcet, was the first to point out a crucial problem of the most basic voting rule that was being used, the majority rule (Condorcet, \citeyear{Condorcet}). The paradox he discovered, that now comes under his name, is explained in the following paragraphs:

\begin{quote}
\textbf{Condorcet Paradox}. Three individuals need to decide on the ranking of three alternatives $\{\triangle,\ocircle,\square\}$. Each individual expresses her own ranking and the collective outcome is aggregated by pairwise majority: an alternative is preferred to a second one if and only if a majority of the individuals prefer the first alternative to the second. Consider the following situation:  

{\large
\[\begin{array}{c}
\triangle <_1 \ocircle <_1 \square \\
\square <_2 \triangle <_2 \ocircle \\
\ocircle <_3 \square <_3 \triangle \\
\rule{3.1cm}{.2mm} \\
\triangle < \ocircle < \square < \triangle \\
\end{array}\]
}

When computing the outcome of the pairwise majority rule, we notice that there is a majority of individuals preferring the circle to the triangle ($\triangle<\ocircle$); that there is a majority of individuals preferring the square to the circle ($\ocircle<\square$); and, finally, that there is a majority of individuals preferring the triangle to the square ($\square<\triangle$). The resulting outcome fails to be a linear order, giving rise to a circular collective preference between the alternatives.
\end{quote}

\subsubsection{Preference Aggregation}

Condorcet's paradox was rediscovered in the second half of the XXth century while a whole theory of \emph{preference aggregation} was being developed \citep[see, e.g.,][]{Gaertner}.
This framework considers a finite set of individuals $\N$ expressing preferences over a finite set of alternatives $\X$. A preference relation is represented by a binary relation over $\X$. 
Preference relations are traditionally assumed to be \emph{weak orders}, i.e., reflexive, transitive and complete binary relations. 
Another common assumption is representing preferences as \emph{linear orders}, i.e., irreflexive, transitive and complete binary relations. 
In the sequel we shall assume that preferences are represented as linear orders, writing $aPb$ for ``alternative $a$ is strictly preferred to $b$''. 
Each individual in $\N$ submits a linear order $P_i$, forming a profile $\P=(P_1,\dots,P_{|\N|})$.
Let $\Lang(\X)$ denote the set of all linear orders on $\X$. 
Given a finite set of individuals $\N$ and a finite set of alternatives $\X$, a \emph{social welfare function} is a function $F:\Lang(\X)^\N\to\Lang(X)$.



\subsubsection{Translation}\label{sec:PAtranslation}

Given a preference aggregation problem defined by a set of individuals $\N$ and a set of alternatives $\X$, let us consider the following setting for binary aggregation. Define a set of issues $\I_{\X}$ as the set of all pairs $(a,b)$ in $\X$. The domain $\D_{\X}$ of aggregation is $\{0,1\}^{|\X|^2}$. In this setting, a binary ballot $B$ corresponds to a binary relation $P$ over $\X$: $B_{(a,b)}=1$ if and only if $a$ is in relation to $b$ ($aPb$). Given this representation, we can associate with every SWF for $\X$ and $\N$ an aggregation procedure that is defined on a subdomain of $\D_{\X}^\N$. We now characterise this domain as the set of models of a suitable integrity constraint.

Using the propositional language $\Lang_{\PS}$ constructed over the set $\I_\X$, we can express properties of binary ballots in $\D_{\X}$. In this case $\Lang_{\PS}$ consists of $|\X|^2$ propositional symbols, which we call $p_{ab}$ for every issue $(a,b)$.
The properties of linear orders can be enforced on binary ballots using the following set of integrity constraints, which we shall call $\ic_<$:\footnote{We will use the notation $\ic$ both for a single integrity constraint and for a set of formulas---in the latter case considering as the actual constraint the conjunction of all the formulas in $\ic$.}

\begin{description}[noitemsep]
\item \textbf{Irreflexivity}: $\neg p_{aa}$ for all $a\in\X$ 
\item \textbf{Completeness}:  $p_{ab}\vee p_{ba}$ for all $a\neq b\in \X$
\item \textbf{Transitivity}:  $p_{ab}\wedge p_{bc}{\rightarrow} p_{ac}$ for  $a,b,c\in \X$ pairwise distinct
\end{description}

\noindent
In case preferences are expressed using weak orders rather than linear orders, it is sufficient to replace the integrity constraints of irreflexivity in $\ic_<$ with their negation to obtain a similar correspondence between SWFs and aggregation procedures.  By dropping the axiom of completeness instead, we obtain preference aggregation with partial orders. Many other classical properties of preferences can be expressed with this formalism. A notable example is the property of \emph{negative transitivity}, which is expressed by the following integrity constraint: $\neg p_{ab} \wedge \neg p_{bc} \rightarrow \neg p_{ac}$ for $a,b,c\in X$ pairwise distinct.



\subsubsection{The Condorcet Paradox in Binary Aggregation}

\begin{table}
\begin{center}
\label{table:condo}
\begin{tabular}{cccc}
\hline\noalign{\smallskip}
 & $\triangle\ocircle$ & $\ocircle\square$ & $\triangle\square$ \\
\noalign{\smallskip}\hline\noalign{\smallskip}
Voter 1 & 1 & 1 & 1 \\
Voter 2 & 1 & 0 & 0 \\
Voter 3 & 0 & 1 & 0 \\
\noalign{\smallskip}\hline
\emph{Maj} & 1 & 1 & 0 
 \end{tabular}
\end{center}
\caption{The Condorcet paradox in binary aggregation.}
\end{table}

The translation presented in the previous section enables us to express the Condorcet paradox in terms of Definition~\ref{def:paradox}. Let $\X=\{\triangle,\ocircle,\square\}$ and let $\N$ contain three individuals. Consider the profile $\B$ for $\I_\X$ in the Table~\ref{table:condo}, where we have omitted the values of the reflexive issues $(\triangle,\triangle)$ (always 0 by $\ic_<$), and specified the value of only one of $(\triangle,\ocircle)$ and $(\ocircle,\triangle)$ (the other can be obtained by taking the opposite of the value of the first), and accordingly for the other alternatives.
Every individual ballot satisfies $\ic_<$, but the outcome obtained using the majority rule \emph{Maj} (which corresponds to pairwise majority in preference aggregation) does not satisfy $\ic_<$: the formula $p_{\triangle\ocircle}\wedge p_{\ocircle\square}\rightarrow p_{\triangle\square}$ is falsified by the outcome. Therefore, $(\text{\emph{Maj}},\B,\ic_<)$ is a paradox by Definition~\ref{def:paradox}.


\subsection{The Discursive Dilemma and Judgment Aggregation}\label{sec:jaba}


The discursive dilemma emerged from the formal study of court cases that was carried out in recent years in the literature on law and economics, generalising the observation of a paradoxical situation known as the ``doctrinal paradox'' \citep{KornhauserSager1986}. 
Such a setting was first given mathematical treatment by \citet{ListPettit2002}, giving rise to an entirely new research area in Social Choice Theory known as \emph{judgment aggregation}. 
Earlier versions of this paradox can be found in work by \citet{Guilbaud1952} and \citet{Vacca1922}. We now describe one of the most common versions of the discursive dilemma:

\begin{quote}
\textbf{Discursive Dilemma}. A court of three judges has to decide on the liability of a defendant under the charge of breach of contract. An individual is considered liable if there was a valid contract and her behaviour was such as to be considered a breach of the contract. The court takes three majority decisions on the following issues: there was a valid contract ($\alpha$), the individual broke the contract ($\beta$), the defendant is liable ($\alpha \wedge \beta$). Consider the following situation:

\begin{center}
 \begin{tabular}{cccc}
 & $\alpha$ & $\beta$ & $\alpha\wedge\beta$ \\
\toprule
Judge 1 & yes & yes & yes \\
Judge 2 & no & yes & no \\
Judge 3 & yes & no & no \\
\midrule
Majority & yes & yes & no \\
 \end{tabular}
\end{center}

All judges express consistent judgments: they accept the third proposition if and only if the first two are accepted. However, even if there is a majority of judges who believe that there was a valid contract, and even if there is a majority of judges who believe that the individual broke the contract, the individual is considered \emph{not liable} by a majority of the individuals. 
\end{quote}



\subsubsection{Judgment Aggregation}\label{sec:JAtranslation}\index{judgment aggregation}

Judgement aggregation (JA) considers problems in which a finite set of individuals $\N$ has to generate a collective judgment over a set of interconnected propositional formulas \citep[see, e.g.,][]{ListPuppe2009}.
Formally, given a propositional language $\Lang$, an \emph{agenda} is a finite nonempty subset $\Phi\subseteq\Lang$ that does not contain doubly-negated formulas and is closed under complementation (i.e, $\alpha\in\Phi$ whenever $\neg\alpha\in\Phi$, and $\neg\alpha\in\Phi$ for non-negated $\alpha\in\Phi$). 

Each individual in $\N$ expresses a \emph{judgment set}\index{judgment set} $J\subseteq \Phi$, as the set of those formulas in the agenda that she judges to be true. Every individual judgment set $J$ is assumed to be \emph{complete} (i.e., for each $\alpha\in\Phi$ either $\alpha$ or its complement are in $J$) and \emph{consistent} (i.e., there exists an assignment that makes all formulas in $J$ true).
Denote by $\mathcal J (\Phi)$ the set of all complete and consistent subsets of $\Phi$.
Given a finite agenda $\Phi$ and a finite set of individuals~$\N$, a \emph{JA procedure} for $\Phi$ and $\N$ is a function $F: \mathcal J (\Phi)^\N \to 2^{\Phi}$.



\subsubsection{Translation}\label{sec:icphi}

Given a judgment aggregation framework defined by an agenda $\Phi$ and a set of individuals $\N$, let us now construct a setting for binary aggregation with integrity constraints that interprets it.
Let the set of issues $\I_{\Phi}$ be equal to the set of formulas in $\Phi$. The domain $\D_{\Phi}$ of aggregation is therefore $\{0,1\}^{|\Phi|}$. 
In this setting, a binary ballot $B$ corresponds to a judgment set: $B_{\alpha}=1$ if and only if $\alpha\in J$. Given this representation, we can associate with every JA procedure for $\Phi$ and $\N$ a binary aggregation procedure on a subdomain of~$\D_{\Phi}^\N$.

It is important to remark that this is not exactly the standard way of interpreting JA in binary aggregation. The embedding that is given, for instance, by \citet{DokowHolzman2009, DokowHolzman2008}, associates with every judgment set a binary ballot over a set of issues representing only the \emph{positive} formulas in $\Phi$, considering a rejection of the issue associated with a formula $\phi$ as an acceptance of its negation~$\neg\phi$. The same embedding is given by \citet[][Section~2.3]{ListPuppe2009}. 
In our translation we made the choice of introducing both an issue for $\phi$ and one for $\neg\phi$, adding an additional integrity constraint to enforce the completeness of a judgment set. This allows us to easily generalise the framework to the case of incomplete ballots \citep[see, e.g.,][]{DietrichList2008}, without having to resort to an additional symbol for abstention \citep[as is done, e.g., by][]{DokowHolzman2010}

As we did for the case of preference aggregation, we now define a set of integrity constraints for $\D_{\Phi}$ to enforce the properties of consistency and completeness of individual judgment sets. 
Recall that the propositional language is constructed in this case on $|\Phi|$ propositional symbols $p_\alpha$, one for every $\alpha\in\Phi$. 
Call an inconsistent set of formulas each proper subset of which is consistent \emph{minimally inconsistent set} (mi-set). Let $\ic_\Phi$\index{integrity constraint!$c$@$\ic_\Phi$} be the following set of integrity constraints:

\begin{description}  
\item \textbf{Completeness}:  $p_\alpha{\vee} p_{\neg\alpha}$ for all $\alpha\in\Phi$
\item  \textbf{Consistency}: $\neg(\bigwedge_{\alpha\in S} p_\alpha)$ for every mi-set $S\subseteq\Phi$
\end{description}

\noindent
While the interpretation of the first formula is straightforward, we provide some further explanation for the second one. If a judgment set $J$ is inconsistent, then it contains a minimally inconsistent set, obtained by sequentially deleting one formula at the time from $J$ until it becomes consistent. This implies that the constraint previously introduced is falsified by the binary ballot that represents $J$, as all issues associated with formulas in a mi-set are accepted. \emph{Vice versa}, if all formulas in a mi-set are accepted by a given binary ballot, then clearly the judgment set associated with it is inconsistent.


In conclusion, the same kind of correspondence we have shown for SWFs holds between complete and consistent JA procedures and binary aggregation procedures that are collectively rational with respect to $\ic_\Phi$.


\subsubsection{The Discursive Dilemma in Binary Aggregation}

The same procedure that we have used to show that the Condorcet paradox is an instance of our general definition of paradox applies here for the case of the discursive dilemma. 
Let $\Phi$ be the agenda $\{\alpha, \beta, \alpha\wedge\beta \}$, in which we have omitted negated formulas, as for any $J\in\mathcal{J}(\Phi)$ their acceptance can be inferred from the acceptance of their positive counterparts. Consider the profile $\B$ for $\I_\Phi$ described in Table~\ref{table:discu}.
Every individual ballot satisfies $\ic_\Phi$, while the outcome obtained by using the majority rule contradicts one of the constraints of consistency, namely $\neg(p_{\alpha}\wedge p_{\beta}\wedge p_{\neg (\alpha\wedge\beta)})$. Hence, $(\text{\emph{Maj}},\B,\ic_{\Phi})$ constitutes a paradox by Definition~\ref{def:paradox}.

\begin{table}
\begin{center}
\label{table:discu}
\begin{tabular}{cccc}
\hline\noalign{\smallskip}
 & $\alpha$ & $\beta$ & $\alpha\wedge\beta$ \\
\noalign{\smallskip}\hline\noalign{\smallskip}
Judge 1 & 1 & 1 & 1 \\
Judge 2 & 0 & 1 & 0 \\
Judge 3 & 1 & 0 & 0 \\
\noalign{\smallskip}\hline
\emph{Maj} & 1 & 1 & 0 
 \end{tabular}
\end{center}
\caption{The discursive dilemma in binary aggregation.}
\end{table}


\subsection{The Ostrogorski Paradox}\label{sec:ostrogorski}\index{paradox!Ostrogorski paradox}


Another paradox listed by \citet{Nurmi} as one of the main paradoxes of the majority rule on multiple issues is the Ostrogorski paradox. \citet{Ostrogorski} published a treaty in support of procedures inspired by direct democracy, pointing out several fallacies that a representative system based on party structures can encounter. \citet{RaeDaudt1976} later focused on one such situation, presenting it as a paradox or a dilemma between two equivalently desirable procedures (the direct and the representative one), giving it the name of ``Ostrogorski paradox". This paradox, in its simplest form, occurs when a majority of individuals are supporting a party that does not represent the view of a majority of individuals on a majority of issues.

\begin{quote}
\textbf{Ostrogorski Paradox}.
Consider the following situation: there is a two party contest between the Mountain Party (MP) and the Plain Party (PP); three individuals (or, equivalently, three equally big groups in an electorate) will vote for one of the two parties if their view agrees with that party on a majority of the three following issues: economic policy ($E$), social policy ($S$), and foreign affairs policy ($F$).
Consider the following situation:

\begin{center}

 \begin{tabular}{ccccc}
 & $E$ & $S$ & $F$ & Party supported\\
\toprule
Voter 1 & MP & PP & PP & PP \\
Voter 2 & PP & PP & MP & PP\\
Voter 3 & MP & PP & MP & MP\\ 
\midrule
\emph{Maj} & MP & PP & MP & PP 
\end{tabular}

\label{table:ostrogorski}
\end{center}

The result of the two party contest, assuming that the party that has the support of a majority of the voters wins, declares the Plain Party the winner. However, a majority of individuals support the Mountain Party both on the economic policy E \emph{and} on the foreign policy F. Thus, the elected party (the PP) is in disagreement with a majority of the individuals on a majority of the issues. 
\end{quote}

\noindent
\citet{Bezembinder1985} generalised this paradox, defining two different rules for compound majority decisions. The first, the representative outcome, outputs as a winner the party that receives support by a majority of the individuals. The second, the direct outcome, outputs the party that receives support on a majority of issues by a majority of the individuals. An instance of the Ostrogorski paradox occurs whenever the outcome of these two procedures differ.

Stronger versions of the paradox can be devised, in which the losing party represents the view of a majority on \emph{all} the issues involved (see, e.g., Rae and Daudt, \citeyear{RaeDaudt1976}; see also our Table~\ref{table:ostrogorski2}). Further studies of the ``Ostrogorski phenomenon" have been carried out by \citet{DebKelsey1987} as well as by \citet{EckertKlamer2009}.
The relation between the Ostrogorski paradox and the Condorcet paradox has been investigated in several papers \citep{Kelly1989, RaeDaudt1976}, while a comparison with the discursive dilemma was carried out by \citet{Pigozzi2005}.



\subsubsection{The Ostrogorski Paradox in Binary Aggregation}

In this section, we provide a binary aggregation setting that represents the Ostrogorski paradox as a failure of collective rationality with respect to a suitable integrity constraint. 

Let $\{E,S,F\}$ be the set of issues at stake, and let the set of issues $\I_O= \{E,S,F,A\}$ consist of the same issues plus an extra issue $A$ to encode the support for the first party (MP).
A binary ballot over these issues represents the individual view on the three issues $E$, $S$ and $F$: if, for instance, $b_E=1$, then the individual supports the first party MP on the first issue $E$. Moreover, it also represents the overall support for party MP (in case issue $A$ is accepted) or PP (in case $A$ is rejected).
In the Ostrogorski paradox, an individual votes for a party if and only if she agrees with that party on a majority of the issues. This rule can be represented as a rationality assumption by means of the following integrity constraint~$\ic_O$:

$$p_A\leftrightarrow [(p_E\wedge p_S) \vee (p_E\wedge p_F) \vee (p_S\wedge p_F)]$$ 

\noindent
An instance of the Ostrogorski paradox can therefore be represented by the profile $\B$ described in Table~\ref{table:ostro}. 
Each individual accepts issue $A$ if and only if she accepts a majority of the other issues. However, the outcome of the majority rule is a rejection of issue $A$, even if a majority of the issues gets accepted by the same rule. Therefore, the triple $(\text{\emph{Maj}}, \B, \ic_O)$ constitutes a paradox by Definition~\ref{def:paradox}.

\begin{table}[]
\begin{center}
\label{table:ostro}
\begin{tabular}{ccccc}
\hline\noalign{\smallskip}
 & $E$ & $S$ & $F$ & $A$\\
\noalign{\smallskip}\hline\noalign{\smallskip}
Voter 1 & 1 & 0 & 0 & 0 \\
Voter 2 & 0 & 0 & 1 & 0\\
Voter 3 & 1 & 0 & 1 & 1\\
\noalign{\smallskip}\hline
\emph{Maj} & 1 & 0 & 1 & 0
 \end{tabular}
\end{center}
\caption{The Ostrogorski paradox in binary aggregation.}
\end{table}

Using this formalism we can easily devise a strict version of the Ostrogorski paradox, in which the winning party disagrees on a majority of the issues with \emph{all} the individuals. Such a profile is described in Table~\ref{table:ostrogorski2}.

\begin{table}[]
\begin{center}
\label{table:ostrogorski2}
\begin{tabular}{ccccc}
\hline\noalign{\smallskip}
 & $E$ & $S$ & $F$ & $A$\\
\noalign{\smallskip}\hline\noalign{\smallskip}
Voter 1 & 1 & 0 & 0 & 0 \\
Voter 2 & 0 & 1 & 0 & 0\\
Voter 3 & 0 & 0 & 1 & 0\\
Voter 4 & 1 & 1 & 1 & 1\\
Voter 5 & 1 & 1 & 1 & 1\\ 
\noalign{\smallskip}\hline
\emph{Maj} & 1 & 1 & 1 & 0\\
 \end{tabular}
\end{center}
\caption{Strict version of the Ostrogorski paradox in binary aggregation.}

\end{table}



\subsection{The Common Structure of Paradoxical Integrity Constraints}\label{sec:commonstructure}


We can now make a crucial observation concerning the syntactic structure of the integrity constraints that formalise the paradoxes we have presented so far. 
First, for the case of the Condorcet paradox, we observe that the formula encoding the transitivity of a preference relation is the implication $p_{ab}\wedge p_{bc} \rightarrow p_{ac}$. 
This formula is equivalent to $\neg p_{ab} \vee \neg p_{bc} \vee p_{ac}$, which is a clause of size~3, i.e., it is a disjunction of three different literals. 
Second, the formula which appears in the translation of the discursive dilemma is also equivalent to a clause of size~3, namely $\neg p_{\alpha} \vee  \neg p_{\beta} \vee  \neg p_{\neg (\alpha\wedge\beta)} $. 
Third, the formula which formalises the majoritarian constraint underlying the Ostrogorski paradox, is equivalent to the following conjunction of clauses of size~3:
\begin{eqnarray*}
(p_A\vee\neg p_E\vee \neg p_F)\wedge (p_A\vee\neg p_E\vee \neg p_S)\wedge (p_A\vee\neg p_S\vee \neg p_F)\wedge \\ 
\wedge (\neg p_A\vee p_E\vee  p_F) \wedge(\neg p_A\vee p_E\vee  p_S)\wedge (\neg p_A\vee p_S\vee  p_F)
\end{eqnarray*}


\noindent 
Thus, we observe that the \textbf{integrity constraints formalising the most classical paradoxes in aggregation theory all feature a clause (i.e., a disjunction) of size at least~3.}\footnote{This observation is strongly related to a result by \citet{NehringPuppe2007} in the framework of judgment aggregation, which characterises the set of paradoxical agendas for the majority rule as those agendas containing a minimal inconsistent subset of size at least~3.}


\subsection{Further Paradoxes on Multiple Issues}\label{sec:mep}


In this section we describe two further paradoxes that can be analysed using our framework of binary aggregation with integrity constraints: the paradox of divided government and the paradox of multiple elections. Both situations concern a paradoxical outcome obtained by using the majority rule on an aggregation problem defined on multiple issues. The first paradox can be seen as an instance of a more general behaviour described by the second paradox.


\subsection{The Paradox of Divided Government}

The paradox of divided government is a failure of collective rationality that was pointed out for the first time by \citet{BramsEtAl1993}. Here we follow the presentation of \citet{Nurmi1997}. 

\begin{quote}\textbf{The paradox of divided government}.
Suppose that 13 voters (equivalently, groups of voters) can choose for Democratic (D) or Republican (R) candidate for the following three offices: House of Representatives (H), Senate (S) and the governor (G). It is a common assumption that in case the House of Representatives gets a Republican candidate, then at least one of the remaining offices should go to Republicans as well. Consider now the following profile:

\begin{center}
\begin{tabular}{lcccc}
 & $H$ & $S$ & $G$ \\
\toprule
Voters 1-3 & D & D & D \\
Voter 4 & D & D & R \\
Voter 5 & D & R & D \\
Voter 6 & D & R & R \\
Voters 7-9 & R & D & R \\ 
Voters 10-12 & R & R & D\\ 
Voter 13 & R & R & R \\ 
\midrule
\emph{Maj} & R & D & D\\
\end{tabular}
\end{center}

In this situation it is exactly the combination that had to be avoided (i.e., RDD) that is elected, even if no individual voted for it. 
\end{quote}

\noindent
This paradox can be easily seen as a failure of collective rationality: it is sufficient to replace the letters D and R with 0 and 1, and to formulate the integrity constraint as $\neg(p_H \wedge \neg p_S \wedge \neg p_G$). The binary ballot $(1,0,0)$ is therefore ruled out as irrational, encoding the combination (R,D,D) that needs to be avoided.

This type of paradox can be observed in cases like the elections of a committee. Even if it is recognised by every individual that a certain committee structure is unfeasible (i.e., it will not work well together), this may be the outcome of aggregation if the majority rule is being used. 



\subsection{The Paradox of Multiple Elections}\index{paradox!of multiple elections}

Whilst the Ostrogorski paradox was devised to stage an attack against representative systems of collective choice based on parties, the paradox of multiple elections (MEP) is based on the observation that when voting directly on multiple issues, a combination that was not supported nor liked by any of the voters can be the winner of the election \citep{BramsEtAl1998, LacyNiou2000}. While the original model takes into account the full preferences of individuals over combinations of issues, if we focus on only those ballots that are submitted by the individuals, then an instance of the MEP can be represented as a paradox of collective rationality. Let us consider a simple example.

\begin{quote}
\textbf{Multiple election paradox}. 
Suppose three voters need to take a decision over three binary issues $A$, $B$ and $C$. Their ballots are described in the following table:

\begin{center}
\begin{tabular}{ccccc}

 & $A$ & $B$ & $C$ \\
\toprule
Voter 1 & 1 & 0 & 1 \\
Voter 2 & 0 & 1 & 1 \\
Voter 3 & 1 & 1 & 0 \\
\midrule
\emph{Maj} & 1 & 1 & 1\\

\end{tabular}	
\end{center}

The outcome of the majority rule in this situation is the acceptance of all three issues, even if this combination was not voted for by any of the individuals. 
\end{quote}

\noindent
While there seems to be no integrity constraint directly causing this paradox, we may represent the profile in the example above as a situation in which the three individual ballots are bound by a budget constraint $\neg(p_A \wedge p_B \wedge p_C)$. Even if all individuals are giving acceptance to two issues each, the result of the aggregation is the unfeasible acceptance of all three issues. 

As can be deduced from our previous discussion, every instance of the MEP gives rise to several instances of a binary aggregation paradox for Definition~\ref{def:paradox}. 
To see this, it is sufficient to find an integrity constraint that is satisfied by all individuals and not by the outcome of the aggregation.\footnote{Such a formula always exists. Consider for instance the disjunction of the formulas specifying each of the individual ballots. This integrity constraint forces the result of the aggregation to be equal to one of the individual ballots on the given profile, thus generating a binary aggregation paradox from a MEP.} On the other hand, every instance of Definition~\ref{def:paradox} in binary aggregation represents an instance of the MEP, as the irrational outcome cannot have been voted for by any of the individuals. 


In their paper, \citet{BramsEtAl1998} provide many versions of the multiple election paradox, varying the number of issues and the presence of ties. 
\citet{LacyNiou2000} enrich the model by assuming that individuals have a preference order over combinations of issues and submit just their top candidate for the election. They present situations in which, e.g., the winning combination is a Condorcet loser (i.e., it loses in pairwise comparison with all other combinations). 
Some answers to the problem raised by the MEP have already been proposed in the literature on Artificial Intelligence. For instance, a number of papers have studied the problem of devising sequential elections to avoid the MEP in case the preferences of the individuals over combinations of multiple issues are expressed in a suitable preference representation language \citep{XiaEtAl2011, ConitzerXia2012}.


\section{The Majority Rule: Characterisation of Paradoxes}\label{sec:majority}


In this section we generalise the observation made in the Section~\ref{sec:commonstructure} to a full theorem, characterising the class of integrity constraints that are lifted by the majority rule as those formulas that can be expressed as a conjunction of clauses (i.e., disjunctions) of maximal size~2.\footnote{The theorem presented in this section was published in our previous work \citep{GrandiEndrissAIJ2013} as part of a more systematic analysis of collective rationality and axiomatic conditions on aggregation procedures. Here we present a simpler version of the result and we provide a direct proof.}
Our characterisation may be considered a ``syntactic counterpart'' of a result by \citet{NehringPuppe2007} in judgment aggregation, characterising agendas on which the majority rule outputs a consistent outcome as those agendas that only contain minimally inconsistent subsets of maximal size~2.


Let us first provide a formal definition of the majority rule. 
Let $N^\B_j$ be the set of individuals that accept issue $j$ in profile $\B$.
In case the number of individuals is odd, the \emph{majority rule} (\emph{Maj}) has a unique definition by accepting issue $j$ if and only if $|N^\B_j|\geq\frac{n+1}{2}$. For the remainder of this section we make the assumption that the number of individuals is~odd. Recall that an aggregation procedure is collectively rational (CR) with respect to an integrity constraint $\ic$ if does not generate any paradox with $\ic$ (cf. Definition~\ref{def:CR}).


\begin{theorem}\label{thm:icmajority}
The majority rule Maj is CR with respect to $\ic$ if and only if $\ic$ is equivalent to a conjunction of clauses of maximal size~2.
\end{theorem}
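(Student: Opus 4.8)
The plan is to prove the two implications separately: the right-to-left direction (a conjunction of short clauses is lifted by \emph{Maj}) is a short counting argument, while the left-to-right direction (collective rationality forces a $2$-clause representation) rests on characterising which sets $\mo(\ic)$ are preserved by coordinatewise majority.

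For the sufficiency direction, suppose $\ic$ is equivalent to $\bigwedge_k C_k$ with each $C_k$ a clause of size at most $2$. Since $F(\B)\in\mo(\ic)$ exactly when $F(\B)$ satisfies every $C_k$, it suffices to show that \emph{Maj} preserves each clause of size at most $2$ on its own. I would argue by contradiction. If the majority outcome falsifies a clause $\ell_a\vee\ell_b$, where $\ell_a,\ell_b$ are literals on issues $a,b$, then on issue $a$ a set of at least $\frac{n+1}{2}$ individuals assigns the value falsifying $\ell_a$, and likewise on issue $b$ for $\ell_b$. As $n$ is odd, two subsets of $\N$ of size at least $\frac{n+1}{2}$ must meet (their intersection has size at least $2\cdot\frac{n+1}{2}-n=1$), so some individual falsifies both literals, hence the clause, contradicting $B_i\in\mo(\ic)$ for all $i$. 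The unit-clause and empty-clause cases are immediate.

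For the necessity direction I would proceed in two steps. First, I reduce collective rationality to a closure property: $\mo(\ic)$ is closed under the coordinatewise majority (median) of any three of its elements. Given $x,y,z\in\mo(\ic)$ and writing $n=2k+1$, I form the profile consisting of $k$ copies of $x$, $k$ copies of $y$ and one copy of $z$; a direct case check on each issue shows the majority value equals the value on which at least two of $x,y,z$ agree, so $F(\B)=\mathrm{med}(x,y,z)$, which lies in $\mo(\ic)$ by CR. Second, I show that a median-closed set is definable by a conjunction of clauses of size at most $2$: let $\psi$ be the conjunction of all clauses of size at most $2$ that are implied by $\ic$. Then $\mo(\ic)\subseteq\mo(\psi)$ is trivial, and $\psi$ is by construction of the required form, so everything hinges on the reverse inclusion.

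The reverse inclusion $\mo(\psi)\subseteq\mo(\ic)$ is the main obstacle. Given $v\models\psi$, the absence of any violated implied $2$-clause means $v$ is \emph{pairwise consistent} with $\mo(\ic)$: for every pair of issues some element of $\mo(\ic)$ agrees with $v$ on both. I would then bootstrap pairwise consistency up to full consistency using median-closure. The key step is that, given elements of $\mo(\ic)$ each agreeing with $v$ on all but one coordinate of a fixed finite set (a different coordinate omitted each time), their median agrees with $v$ on the entire set, since every coordinate is ``correct'' in at least two of the three chosen elements. Iterating this from pairs upward shows that for every set of coordinates some element of $\mo(\ic)$ agrees with $v$ there; applied to the full coordinate set it yields $v\in\mo(\ic)$, establishing $\ic\equiv\psi$. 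The degenerate case $\mo(\ic)=\emptyset$, where CR holds vacuously, I would dispatch separately by noting that an unsatisfiable constraint is equivalent to a conjunction of two contradictory unit clauses.
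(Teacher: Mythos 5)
Your proof is correct, and your sufficiency direction coincides with the paper's: the same pigeonhole argument showing that two majorities of size at least $\frac{n+1}{2}$ must intersect, after reducing to individual clauses via the observation that CR is preserved under conjunction. The necessity direction, however, takes a genuinely different route. The paper argues contrapositively and syntactically: it represents $\ic$ by the conjunction of its prime implicates (via minimally falsifying partial assignments), extracts such a partial assignment $\rho^*$ of size at least~$3$, and constructs an explicit paradoxical profile in which each of three voters flips $\rho^*$ on one variable and extends to a satisfying assignment by minimality, so that the majority outcome realises $\rho^*$ and violates $\ic$. You instead argue directly and semantically: the $k$--$k$--$1$ profile shows that collective rationality forces $\mo(\ic)$ to be closed under coordinatewise median, and your bootstrapping induction (three witnesses, each agreeing with $v$ on all of $S$ except a different coordinate, whose median is correct everywhere on $S$) establishes the classical fact that a median-closed subset of $\{0,1\}^m$ is exactly the model set of the conjunction $\psi$ of all implied clauses of size at most~$2$. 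Your route dispenses with prime implicates entirely and connects the theorem to the characterisation of $2$-CNF-definable sets as median-closed subsets of the hypercube; the paper's route is more constructive, in that it exhibits a concrete paradoxical profile generalising the tables of Section~\ref{sec:unify}. Your handling of the degenerate unsatisfiable case and of unit clauses is also sound, so the argument is complete.
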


\begin{proof}
$(\leftarrow)$ Let $\ic$ be equivalent to a conjunction of clauses of maximal size~2, which we indicate as $\psi=\bigwedge_k D_k$.
We want to show that \emph{Maj} is CR wrt. $\ic$.
We first make the following two observations.
First, since two equivalent formulas define the same set of rational ballots, showing that \emph{Maj} is CR wrt. $\ic$ is equivalent to showing that \emph{Maj} is CR wrt. $\psi$. 
Second, if the majority rule is collectively rational wrt. two formulas $\phi_1$ and $\phi_2$ then it is also CR wrt. their conjunction $\phi_1\wedge \phi_2$. Thus, it is sufficient to show that \emph{Maj} is CR wrt. all clauses $D_j$ to conclude that \emph{Maj} is CR wrt. their conjunction and hence with $\ic$. 
Recall that all clauses $D_j$ have maximal size~2. 
The case of a clause of size~1 is easily solved. Suppose $D_k=p_{j_k}$ or $D_k=\neg p_{j_k}$. Since all individuals must be rational the profile will be unanimous on issue $j_k$, and thus the majority will behave accordingly on issue $j_k$, in accordance with the constraint $D_k$.
Let us then focus on a clause $\ic=\ell_j\vee \ell_k$, where $\ell_j$ and $\ell_k$ are two distinct literals, i.e., atoms or negated atoms. 
A paradoxical profile for the majority rule with respect to this integrity constraint features a first majority of individuals not satisfying literal $\ell_j$, and a second majority of individuals not satisfying literal $\ell_k$. 
By the pigeonhole principle these two majorities must have a non-empty intersection, i.e., there exists one individual that does not satisfy both literals $\ell_j$ and $\ell_k$, but this is incompatible with the requirement that all individual ballots satisfy $\ic$. 


$(\Rightarrow)$
Let us now assume for the sake of contradiction that $\ic$ is not equivalent to a conjunction of clauses of maximal size~2. We will now build a paradoxical situation for the majority rule with respect to $\ic$.

We need the following crucial definition:
Call \emph{minimally falsifying partial assignment} (mifap-assignment) for an integrity constraint $\ic$ an assignment to some of the propositional variables that cannot be extended to a satisfying assignment, although each of its proper subsets can.
We now associate with each mifap-assignment $\rho$ for $\ic$ a conjunction $C_\rho= \ell_1\wedge\dots\wedge \ell_k$, where $\ell_i = p_i$ if $\rho(p_i)= 1$ and $\ell_i=\neg p_i$ if $\rho(p_i)=0$ for all propositional symbols $p_i$ on which $\rho$ is defined. The conjunction $C_\rho$ represents the mifap-assignment $\rho$ and it is clearly inconsistent with $\ic$.
The negation of $C_\rho$ is hence a disjunction, with the property of being a minimal clause implied by $\ic$. Such formulas are known in the literature on knowledge representation as the \emph{prime implicates} of $\ic$, and it is a known result that every propositional formula is equivalent to the conjunction of its prime implicates \citep[see, e.g.,][]{Marquis2000}. 
Thus, we can represent $\ic$ with the equivalent formula $\bigwedge _\rho \neg C_\rho$ of all mifap-assignments $\rho$ for $\ic$. From our initial assumption we can infer that at least one mifap-assignment $\rho^*$ has size $>2$, for otherwise $\ic$ would be equivalent to a conjunction of 2-clauses.

We are now ready to show a paradoxical situation for the majority rule with respect to $\ic$.
Consider the following profile. Let $y_1, y_2, y_3$ be three propositional variables that are fixed by $\rho^*$. 
Let the first individual $i_1$ accept the issue associated with $y_1$ if $\rho(y_1)=0$, and reject it otherwise, i.e., let $b_{1,1}=1-\rho^*(y_1)$. 
Furthermore, let $i_1$ agree with $\rho^*$ on the remaining propositional variables. By minimality of $\rho^*$, this partial assignment can be extended to a satisfying assignment for $\ic$, and let $B_{i_1}$ be such an assignment.
Repeat the same construction for individual $i_2$, this time changing the value of $\rho^*$ on $y_2$ and extending it to a satisfying assignment to obtain $B_{i_2}$. 
The same construction for $i_3$, changing the value of $\rho^*$ on issue $y_3$ and extending it to a satisfying assignment $B_{i_3}$. 
Recall that there are at least $3$ individuals in $\N$. If there are other individuals, let individuals $i_{3s+1}$ have the same ballot $B_{i_1}$, individuals $i_{3s+2}$ ballot $B_{i_2}$ and individuals $i_{3s+3}$ ballot $B_{i_3}$.
The basic profile for $3$ issues and $3$ individuals is shown in Table~\ref{table:majority}.
In this profile, which can easily be generalised to the case of more than 3 individuals, there is a majority supporting $\rho^*$ on every variable on which $\rho^*$ is defined. Since $\rho^*$ is a mifap-assignment and therefore cannot be extended to an assignment satisfying $\ic$, the majority rule in this profile is not collectively rational with respect to $\ic$.\qed

\begin{table}[]
\begin{center}
\label{table:majority}
\begin{tabular}{llll}
\hline\noalign{\smallskip}
& $y_1$ & $y_2$ & $y_3$ \\
\noalign{\smallskip}\hline\noalign{\smallskip}
$i_1$ & 1-$\rho^*(y_1)$ & $\rho^*(y_2)$ & $\rho^*(y_3)$ \\
	$i_2$ & $\rho^*(y_1)$ & 1-$\rho^*(y_2)$ & $\rho^*(y_3)$ \\
	$i_3$ & $\rho^*(y_1)$ & $\rho^*(y_2)$ & 1-$\rho^*(y_3)$ \\
\noalign{\smallskip}\hline
	\emph{Maj} & $\rho^*(y_1)$ & $\rho^*(y_2)$ & $\rho^*(y_3)$
 \end{tabular}

\end{center}
\caption{A general paradox for the majority rule wrt. a clause of size~3.}
\end{table}

\end{proof}

\noindent
In case the number of individuals is even the majority rule does not have a unique definition, to account for the case in which exactly half of the voters accept an issue and half of the voters reject it, and a characterisation along the lines of Theorem~\ref{thm:icmajority} cannot be proven. We refer to our previous work \citep{GrandiEndrissAIJ2013} for a more detailed analysis of the set of integrity constraints that are lifted by the majority rule for an even number of individuals.



\section{Conclusions}\label{sec:conclusions}


The first conclusion that can be drawn from this paper dedicated to paradoxes of aggregation is that the majority rule is to be avoided when dealing with collective choices over multiple issues. 
This fact stands out as a counterpart to May's Theorem (\citeyear{May52}), which proves that the majority rule is the only aggregation rule \emph{for a single binary issue} that satisfies a set of highly desirable conditions. The sequence of paradoxes we have analysed in this paper shows that this is not the case when multiple issues are involved. 
While this fact may not add anything substantially new to the existing literature, the wide variety of paradoxical situations encountered in this paper stresses even further the negative features of the majority rule on multi-issue domains.
%

A second more significant conclusion is that a large number of paradoxes of Social Choice Theory share a common structure, and that this structure is formalised by our Definition~\ref{def:paradox}, which stands out as a truly general definition of paradox in aggregation theory. 
Moreover, by analysing the integrity constraints that underlie some of the most classical paradoxes, we were able to identify a common syntactic feature of paradoxical constraints. 
Starting from this observation, we have provided a full characterisation of the integrity constraints that are lifted by the majority rule, as those formulas that are equivalent to a conjunction of clauses of size at most~2. 


To the best of our knowledge Definition~\ref{def:paradox} is the first general definition of paradox in aggregation theory. 
However, the paradoxical situations presented in this paper constitute a fragment of the problems that can be encountered in the formalisation of collective choice problems. For instance, paradoxical situations concerning voting procedures \citep{Nurmi, Saari1989}, which take as input a set of preferences and output a set of winning candidates, are not included in our analysis.
Recent work on paradoxes of aggregation pointed at similarities within different frameworks, e.g., comparing two such examples \citep{Pigozzi2005}, or proposing a geometric approach for the study of paradoxical situations \citep{EckertKlamer2009} using the theoretical setting developed by \citet{Saari}. 
There also exists a considerable amount of work exploring the relation between preference aggregation and other frameworks of aggregation \citep{ListPettit2004, DietrichList2007a, Grossi2009, Grossi2010, Porello2010}. 
Our approach is similar to that of \citet{DietrichList2007a}. In their work, the authors embed the framework of preference aggregation into the framework of judgment aggregation in general logics \citep{Dietrich2007} by using a simple first-order logic of orders. While their aim is more theoretical and their setting is more general, our analysis has the advantage of being based on a very simple logical setting, without losing the generality which is needed for the analysis of such a wide range of paradoxical situations. By doing so we aim at stressing the propositional (i.e., binary) nature of many paradoxes which were encountered in the study of social choice.

The last conclusive statement we would like to put forward regards the interpretation of some of the paradoxes presented in this chapter. 
We have already remarked how some of these examples have been employed in the literature to show weaknesses and advantages of either the direct approach to democratic choice (represented by issue-by-issue aggregation) or the representative one. 
In particular the two paradoxes presented in Section~\ref{sec:mep} (the paradox of divided government and the MEP) seem to suggest that direct decisions over multiple issues should be avoided, at least when issues are not completely independent from one another. 
In our view, elections over multi-issue domains cannot be escaped: not only do they represent a model for the aggregation of more complex objects like preferences and judgments, as seen in Section~\ref{sec:paba} and~\ref{sec:jaba}, but they also stand out as one of the biggest challenges to the design of more complex automated systems for collective decision making. 
%
A crucial problem in the modelling of real-world situations of collective choice is that of identifying the set of issues that best represent a given domain of aggregation, and devising an integrity constraint that models correctly the correlations between those issues.
This problem obviously represents a serious obstacle to a mechanism designer, and is moreover open to manipulation.
However, a promising direction for future work consists in structuring collective decision problems with more detailed models \emph{before} the aggregation takes place, e.g., by discovering a shared order of preferential dependencies between issues \citep{LangXia2009, AiriauEtAlIJCAI2011}, facilitating the definition of collective choice procedures on complex domains without having to elicit the full preferences of individuals.
Such models can be employed in the design and the implementation of automated decision systems, in which a safe aggregation, i.e., one that avoids paradoxical situations, is of the utmost necessity.

\subsection*{Acknowledgements}
Earlier versions of this work have been presented at the Dagstuhl seminar on Computation and Incentives in Social Choice in 2012, and at the Fourth International Workshop on Computational Social Choice (COMSOC-2012) in Krak\'ow. I~would like to thank the audience of both venues for their useful suggestions. I~am indebted to Ulle Endriss and J\'er\^ome Lang for their guidance and for their useful comments on this work.

\bibliography{paradoxes}

\end{document}